\title{
Truthful Bilateral Trade is
Impossible even with Fixed Prices
}
\author{
Erel Segal-Halevi
and
Avinatan Hassidim
}
\newtheorem{theorem}{Theorem}[section]
\newtheorem{lemma}[theorem]{Lemma}
\newtheorem{corollary}[theorem]{Corollary}
\theoremstyle{definition}
\newtheorem{definition}[theorem]{Definition}
\newtheorem{remark}[theorem]{Remark}
\newtheorem{example}[theorem]{Example}
\newcommand\range[2]{\in\{#1,\dots,#2\}}
\newcommand{\eps}{\varepsilon}
\newcommand{\feas}{\textsc{Feasible}}
\begin{document}
\maketitle

\begin{abstract}
A seminal theorem of Myerson and Satterthwaite (1983)
proves that, in a game of bilateral trade between a single buyer and a single seller, no mechanism can be simultaneously 
individually-rational, budget-balanced, incentive-compatible and socially-efficient. 
However, the impossibility disappears if the price is fixed exogenously and the social-efficiency goal is subject to individual-rationality at the given price. 

We show that the impossibility comes back if there are multiple units of the same good, or multiple types of goods, even when the prices are fixed exogenously. Particularly, if there are $M$ units of the same good or $M$ kinds of goods, for some $M\geq 2$, then no truthful mechanism can guarantee more than $1/M$ of the optimal gain-from-trade. In the single-good multi-unit case, if both agents have submodular valuations (decreasing marginal returns), then no truthful mechanism can guarantee more than $1/H_M$ of the optimal gain-from-trade, where $H_M$ is the $M$-th harmonic number ($H_M\approx \ln{M}+1/2$). All upper bounds are tight.
\end{abstract}

\section{Introduction}
Consider a market with two agents: a seller and a buyer. The seller holds an item for sale and the buyer considers buying this item. The seller values the item as $s$ and the buyer values it as $b$. The utilities of both agents are quasi-linear in money. If $s<b$ then there is a potential \emph{gain-from-trade} (GFT) --- if the seller sells the item to the buyer then the social welfare will increase by $b-s$. We look for a \emph{mechanism} that asks the agents to report their values and implements this optimal GFT. The mechanism should satisfy four properties:
\begin{itemize}
\item \emph{Efficiency} --- the increase in social welfare should be $\max(0,b-s)$.
\item \emph{Budget balance (BB)} --- the mechanism designer should not lose.
\item \emph{Individual rationality (IR)} --- the utility of each agent should not decrease.
\item \emph{Dominant-strategy incentive-compatibility (DSIC)} --- it should be a weakly-dominant strategy for each agent to report his true value.
\end{itemize}
It is easy to construct mechanisms that satisfy each three of these four properties. However, 
a classic paper of \cite{myerson1983efficient} proves that it is impossible to satisfy all four properties simultaneously. 
Intuitively, the reason is that it is impossible to truthfully determine prices for trading: if the mechanism pays the seller $p_s < b$, the seller is incentivized to bid e.g. $(p_s + b)/2$ to force the price up; similarly, if the mechanism charges the buyer $p_b>s$, the buyer is incentivized to bid $(p_b+s)/2$ to force the price down.
DSIC can be guaranteed by setting $p_s=b$ and $p_b=s$ (this is the two-sided variant of the Vickrey auction  \citep{vickrey1961counter}) but this mechanism is not BB since it leaves the mechanism designer with a deficit of $b-s$.

Now consider a market in which a price $p$ is fixed exogenously, e.g. by the government. Fixing a price automatically guarantees budget-balance, since all monetary transfers are between buyers and sellers. We are interested in incentive-compatibility and efficiency \emph{subject to individual-rationality}, i.e, we want the buyer to buy the item if-and-only-if $b\geq p\geq s$. In this case the Myerson--Satterthwaite impossibility disappears. It is trivial to attain all these properties using the following  simple mechanism: ask the agents to report their values and perform the trade iff $b\geq p\geq s$.

But this trivial mechanism works only when there is a single unit of a single good. We are interested in the following question:
\begin{quote}
\emph{
In a market with a single buyer, a single seller and two or more items with exogenously-fixed prices,
is it possible to attain incentive-compatibility and efficiency subject to individual-rationality?
}
\end{quote}
Our main result is that this is impossible. 
We quantify the impossibility using the notion of a  \emph{competitive ratio} of a mechanism --- the worst-case ratio of its GFT to the optimal GFT. We present three upper bounds on the competitive ratio:
\begin{enumerate}
\item If the seller holds $M$ different kinds of goods and the buyer wants one good, then the competitive ratio of any DSIC mechanism is at most $1/M$.
\item If the seller holds $M$ units of the same good, then the competitive ratio of any DSIC mechanism is at most $1/M$.
\item If the seller holds $M$ units of the same good, and it is public knowledge that both the seller and the buyer have submodular valuations (decreasing marginal returns), then the competitive ratio of any DSIC mechanism is at most $1/H_M$, where $H_M$ is the $M$-th harmonic number ($H_M\approx \ln{M}+1/2$).
\end{enumerate}
All these upper bounds are tight: they are attained by trivial mechanisms that ignore the agents' bids and determine the trade randomly.

The results are proved as special cases of a general model about bilateral cooperation. This model is presented in Section \ref{sec:model}.
The results on cooperation with general valuations are proved in Section \ref{sec:general}.
The result on cooperation with submodular valuations is proved in 
Section \ref{sec:submodular}.
The corollaries to bilateral trading are proved in Section \ref{sec:bilateral}.


\section{Model}
\label{sec:model}
There are two agents that we call ``buyer'' and ``seller''.
They consider $M$ options for cooperation, numbered $1,\ldots,M$.
If option $i$ is selected, then the buyer's utility is $b_i$ and the seller's utility is $s_i$, which can be positive zero or negative. These utilities are their private information.
Additionally, there is an option 0 which means no cooperation; we assume that $b_0=s_0=0$. 

An outcome $i$ is called \emph{feasible} if both $b_i\geq 0$ and $s_i\geq 0$. Denote the set of feasible outcomes by $\feas$:
\begin{align*}
\feas(b,s) := \{i| b_i\geq 0 \text{ and } s_i\geq 0\}
\end{align*}
Note that $\feas(b,s)$ is never empty since it contains 0.

The optimal utility of a feasible outcome is denoted by OPT:
\begin{align*}
OPT(b,s) := \max_{i\in\feas(b,s)} b_i+s_i
\end{align*}

Bilateral trade is a special case of this general model.
\begin{example}
\label{exm:multiunit}
A seller holds $M$ units of a single good and values each bundle of $i$ units as $\sigma_i$; a potential buyer values each bundle of $i$ units as $\beta_i$. 
The price-per-unit is $p$ and both agents have utility functions quasi-linear in money. Then, $b_i = \beta_i - p\cdot i$ and $s_i = p\cdot i + \sigma_{M-i} - \sigma_{M}$ and OPT is the maximum potential GFT (since we normalize the seller's utility from no cooperation to zero).
\end{example}

\begin{example}
\label{exm:unitdemand}
A seller holds a set $\mathbb{M}$ containing
$M$ different goods, a single unit of each good.
The seller has an arbitrary valuation function over bundles of goods: $\sigma_i(X)$ is the seller's value for bundle $X$.
The buyer has a unit-demand valuation function: he values each item $i$ as $\beta_i$ and is interested in buying at most one item.
The price of item $i$ is fixed at $p_i$.
Then, $b_i = \beta_i - p_i$ and $s_i = p_i + \sigma(\mathbb{M}\setminus \{i\}) - \sigma({\mathbb{M}})$
and OPT is again the maximum potential GFT.
\end{example}

Our goal is to design a mechanism that approximates OPT.

A \emph{mechanism} is a function $r$ that receives as input the reported valuations of the agents, $b'=(b'_1,\dots,b'_M),s'=(s'_1,\dots,s'_M)$. These reports may be different than the true valuations $b=(b_1,\dots,b_M),s=(s_1,\dots,s_M)$. The output of the mechanism, $r(b',s')$, is a vector of $M+1$ probabilities --- a probability for each option including option 0. The probability of option $i$ is denoted by $r_i(b',s')$, and $\sum_{i=0}^M r_i(b',s') = 1$.

The expected gain of the buyer is:
\[
B_r(b',s') = \sum_{i=0}^M r_i(b',s')\cdot b_i = \sum_{i=1}^M r_i(b',s')\cdot b_i
\]
and the expected gain of the seller is:
\[
S_r(b',s') = \sum_{i=0}^M r_i(b',s')\cdot s_i = \sum_{i=1}^M r_i(b',s')\cdot s_i
\]
and the expected gain of the mechanism is:
\begin{align*}
G_r(b',s') &= \sum_{i=1}^M r_i(b',s') \cdot (b_i+s_i)
\end{align*}
\begin{definition}
A mechanism $r$ is called \emph{Dominant-Strategy Incentive-Compatible (DSIC)} --- if for each agent, revealing the true values is a weakly-dominant strategy:
\begin{align*}
\forall b,b',s,s':
&&
B_r(b,s')\geq B_r(b',s')
&&
\text{and}
&&
S_r(b',s)\geq S_r(b',s')
\end{align*}
\end{definition}
If the mechanism $r$ is DSIC, we assume that $b'=b$ and $s'=s$, so the mechanism gain is $G_r(b,s)$.
The \emph{competitive ratio} of a DSIC mechanism is:
\begin{align*}
C_r := \min_{b,s} {G_r(b,s) \over OPT(b,s)}
\end{align*}

\section{General utilities}
\label{sec:general}
Consider the following simple mechanism.
\begin{example}
The \emph{Uniform-Random (UR) mechanism} chooses an option $i\in \{1,\ldots,M\}$ with uniform probability, $r_i=1/M$ for all $i$, regardless of $b'$ and $s'$.
It then lets the agents decide whether they want option $i$ or nothing. The agents will cooperate iff $b_i\geq 0,s_i\geq 0$. Therefore, the expected gain is
\begin{align*}
G_{UR}(b,s)
~~
=
~~
\sum_{i\in\feas(b,s)} {b_i+s_i\over M}
~~
\geq 
~~
{OPT(b,s)\over M}
\end{align*}
This lower bound is tight, since it is possible that that all positive gain comes from a single option. Therefore, $C_{UR} = 1/M$.
\end{example}

Our first theorem shows that, when the agents can have general utility functions, no DSIC mechanism can be better than the uniform-random mechanism.

\begin{theorem}
\label{thm:neg}
When there are $M$ options, every DSIC mechanism has an expected competitive ratio of at most $1/M$.
\end{theorem}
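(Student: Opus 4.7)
The plan is to exhibit $M$ ``single-good'' hard instances and show that DSIC prevents any mechanism from achieving competitive ratio strictly greater than $1/M$ on all of them simultaneously. For each $i \in \{1, \ldots, M\}$ and small $\epsilon > 0$, let $J_i := (\epsilon e_i, \epsilon e_i)$ where $e_i$ denotes the $i$-th standard basis vector of $\mathbb{R}^M$. In $J_i$ every option is feasible (all coordinates are nonnegative), yet only option $i$ contributes positive joint value $b_i + s_i = 2\epsilon$; hence $OPT(J_i) = 2\epsilon$ and the mechanism's ratio on $J_i$ equals $r_i(J_i)$.

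My first step is a short DSIC calculation giving the structural identity $\alpha_i := r_i(J_i) = \max_{b,s} r_i(b,s)$. Because a buyer with type $\epsilon e_i$ has utility $\epsilon \cdot r_i$, DSIC forces his report to attain $\max_{b'} r_i(b', s')$ inside the menu indexed by every seller report $s'$, giving $r_i(\epsilon e_i, s') = \max_{b'} r_i(b', s')$; the analogous seller argument then promotes this to a global double maximum attained at $(\epsilon e_i, \epsilon e_i)$.

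The heart of the proof is to establish $\sum_{i=1}^M \alpha_i \le 1$; pigeonhole then produces some $i^{\star}$ with $\alpha_{i^{\star}} \le 1/M$, and $J_{i^{\star}}$ witnesses a ratio of at most $1/M$. Since the $\alpha_i$'s are attained at different instances, the bound requires aggregating cross-instance DSIC constraints. My intended route uses the ``off-diagonal'' instances $(\epsilon e_i, \epsilon e_j)$ for $i \neq j$: the buyer's DSIC identity gives $r_i(\epsilon e_i, \epsilon e_j) = \max_{b'} r_i(b', \epsilon e_j)$ and the seller's gives $r_j(\epsilon e_i, \epsilon e_j) = \max_{s'} r_j(\epsilon e_i, s')$, and the probability budget on this single instance yields $r_i(\epsilon e_i, \epsilon e_j) + r_j(\epsilon e_i, \epsilon e_j) \le 1$. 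Symmetrising the mechanism by averaging over the $M!$ permutations of options preserves DSIC and only improves worst-case ratio, so WLOG the $\alpha_i$ are equal; combined with the off-diagonal bound this should yield $M \cdot \alpha \le 1$.

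The main obstacle is that DSIC most naturally produces lower bounds on $r_i$ at one instance in terms of $r_i$ at another (for example $\alpha_i \ge r_i(\epsilon \mathbf{1}, \epsilon \mathbf{1})$), the opposite of what is needed. If the off-diagonal aggregation above does not close the argument, the backup is to invoke ``infeasibility-penalty'' instances $(\epsilon e_i - L(\mathbf{1} - e_i), \epsilon e_i - L(\mathbf{1} - e_i))$ with $L \to \infty$: here DSIC implies that any probability on a ``wrong'' option (one with value $-L$) creates unboundedly large loss in the competitive ratio, forcing the mechanism's mass to concentrate in a way that directly bounds $\sum_i \alpha_i$ from above.
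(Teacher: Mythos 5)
Your setup is sound as far as it goes: the observation that $r_i(\epsilon e_i,s')=\max_{b'}r_i(b',s')$ for every $s'$ (and its seller analogue), hence $\alpha_i:=r_i(J_i)=\max_{b,s}r_i(b,s)$, is a correct DSIC ``menu'' identity, and the competitive ratio is indeed at most $\min_i\alpha_i$. But the heart of your plan --- establishing $\sum_{i=1}^M\alpha_i\le 1$ from DSIC identities plus single-instance probability budgets --- cannot work, because that inequality is simply false for DSIC mechanisms in general. Take $M=2$ and the mechanism that ignores the seller and outputs the lottery $(r_1,r_2)=(0.6,0.4)$ when $b'_1\ge b'_2$ and $(0.4,0.6)$ otherwise: the buyer always receives the better of two fixed lotteries, so this is DSIC, yet $\alpha_1=\alpha_2=0.6$ and $\sum_i\alpha_i=1.2>1$. (This mechanism has a poor competitive ratio, so it does not contradict the theorem --- but it shows that any proof of $\sum_i\alpha_i\le 1$ must invoke the hypothesis that the ratio is at least $\alpha$ at \emph{many} instances, not just DSIC.) Your off-diagonal route uses only DSIC and one probability budget per instance, so it cannot succeed; and indeed, as you yourself note, it produces $r_i(\epsilon e_i,\epsilon e_j)\le\alpha_i$, an inequality pointing the wrong way. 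Symmetrization is valid (averaging over permutations preserves DSIC and the ratio, since $OPT$ is permutation-invariant) but does not repair this.

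Your backup route is closer to the truth but stops exactly where the real work begins. The instances $(\epsilon e_i-L(\mathbf{1}-e_i),\,\cdot\,)$ do force $r_i\ge\alpha$ and $r_j\approx 0$ ($j\ne i$) --- but each such constraint lives at a \emph{different} instance, where the probability budget is trivially satisfied; nothing yet forces $M$ masses of size $\alpha$ to coexist at one input. The paper's proof supplies precisely the missing accumulation device: it fixes the seller vector $s^\eps_i=\eps^{M-i}$, whose exponential separation means that (via a scaling lemma) the seller's guarantee isolates the probability of the single option $d$ whenever the buyer's report kills options $d+1,\dots,M$; an induction on $d$ then transfers, one buyer-deviation at a time, the requirements $r_1\ge\alpha-2\eps,\dots,r_d\ge\alpha-2\eps$ onto a common instance $(b,s^\eps)$, yielding $\sum_{i=1}^M r_i\ge M(\alpha-2\eps)$ at a single input and hence $\alpha\le 1/M$. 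To complete your argument you would need to invent this chaining step (or an equivalent one); without it, the proposal identifies the right obstacle but does not overcome it.
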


The following lemma will be used several times in the proof.
\begin{lemma}
\label{lem:general}
Let $r$ be a DSIC mechanism with competitive ratio $\alpha$. Then, for every $b,s$:
\begin{align*}
\sum_{i=1}^M r_i(b,s)\cdot b_i ~~ \geq ~~ \alpha\cdot \max_{i\in\feas(b,s)} b_i
\\
\sum_{i=1}^M r_i(b,s)\cdot s_i ~~ \geq ~~ \alpha\cdot \max_{i\in\feas(b,s)} s_i
\end{align*}
\end{lemma}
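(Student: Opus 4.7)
My plan is to prove the buyer's inequality; the seller's follows by the symmetric argument. First, I reduce to the nontrivial case: let $M_b := \max_{i \in \feas(b,s)} b_i$. If $M_b \leq 0$ there is nothing to show, so fix $i^* \in \feas(b,s)$ with $b_{i^*} = M_b > 0$ (and hence $s_{i^*} \geq 0$ by feasibility).

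The key idea is to collapse the feasible set to $\{0, i^*\}$ by moving to a fictitious buyer valuation $\tilde{b}^{(L)}$ with $\tilde{b}^{(L)}_{i^*} := b_{i^*}$ and $\tilde{b}^{(L)}_j := -L$ for every other $j \in \{1,\ldots,M\}$, where $L$ will eventually tend to $\infty$. Under this profile, $\feas(\tilde{b}^{(L)}, s) = \{0, i^*\}$ and $OPT(\tilde{b}^{(L)}, s) = b_{i^*} + s_{i^*}$. I then apply the competitive-ratio bound at this profile, treating $\tilde{b}^{(L)}$ as the buyer's true value, and expand option by option: once $L > \max_{j} s_j$, every summand with $j \neq 0, i^*$ contributes $r_j(\tilde{b}^{(L)}, s)(s_j - L) \leq 0$, so the inequality $G_r(\tilde{b}^{(L)}, s) \geq \alpha(b_{i^*} + s_{i^*})$ immediately forces $r_{i^*}(\tilde{b}^{(L)}, s) \geq \alpha$. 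Rearranging the same inequality once more bounds $q(L) := \sum_{j \neq 0, i^*} r_j(\tilde{b}^{(L)}, s)$ by $K/L$ for a constant $K = K(b,s)$, so $q(L) \to 0$ as $L \to \infty$.

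Next I invoke DSIC for the buyer, comparing the truthful report $b$ with the misreport $\tilde{b}^{(L)}$ against the seller's report $s$:
\begin{align*}
B_r(b,s) \;&\geq\; B_r(\tilde{b}^{(L)}, s) \\
&=\; r_{i^*}(\tilde{b}^{(L)}, s)\, b_{i^*} \;+\; \sum_{j \neq 0, i^*} r_j(\tilde{b}^{(L)}, s)\, b_j \\
&\geq\; \alpha\, b_{i^*} \;-\; q(L)\cdot \max_j |b_j|.
\end{align*}
Since the left-hand side does not depend on $L$, letting $L \to \infty$ yields $B_r(b, s) \geq \alpha b_{i^*} = \alpha M_b$, which is the buyer inequality. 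The seller inequality is obtained identically, using a fictitious seller valuation $\tilde{s}^{(L)}$ with $\tilde{s}^{(L)}_{i^{**}} := s_{i^{**}}$ and $\tilde{s}^{(L)}_j := -L$ otherwise, where $i^{**}$ maximizes $s_i$ over $\feas(b,s)$, and then applying DSIC for the seller.

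The main obstacle I expect is the design of the fictitious valuation: it must simultaneously pin $OPT$ at the single number $b_{i^*} + s_{i^*}$ by restricting feasibility to $\{0, i^*\}$, and make any allocation to the remaining options so costly that the competitive-ratio bound cannot be met by spreading probability across them. A single finite penalty would leave a slack term that does not vanish in the end; the cleanest fix is the parameter $L$ together with a limiting argument, which works precisely because the target conclusion $B_r(b,s) \geq \alpha b_{i^*}$ itself does not involve $L$.
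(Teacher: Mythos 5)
Your proof is correct and follows the same template as the paper's: construct a fictitious buyer valuation depending on a large parameter $L$, apply the competitive-ratio guarantee at the profile where that valuation is the \emph{true} one, and then use DSIC to transfer the resulting lower bound on the deviation utility back to truthful reporting. The only real difference is the choice of deviation: the paper scales, taking $b' = L\cdot b$ so that $\feas(b',s)=\feas(b,s)$ and the buyer's values dominate $OPT$, which yields the inequality for the whole weighted sum $\sum_i r_i(b',s)b_i$ at once; you instead isolate the argmax $i^*$ and penalize every other option by $-L$, which collapses feasibility to $\{0,i^*\}$ and gives the cleaner intermediate facts $r_{i^*}\geq\alpha$ and $q(L)=O(1/L)$. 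Your version has the advantage of replacing the paper's informal ``$\approx$'' computations with an honest limit, at the cost of only recovering the bound at $i^*$ (which is all the lemma needs). One small inaccuracy: when $M_b=\max_{i\in\feas(b,s)}b_i=0$ (the max is attained only at option $0$, or at an option with $b_{i^*}=0$), the claim $\sum_{i=1}^M r_i(b,s)\,b_i\geq 0$ is \emph{not} vacuous --- it asserts a nontrivial interim-IR-type property of the mechanism. Your own construction handles it (set every coordinate to $-L$, so $\feas=\{0\}$, $OPT=0$, and $G_r\geq 0$ forces all $r_j$, $j\geq 1$, to vanish as $L\to\infty$), so you should run that case explicitly rather than dismiss it.
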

\begin{proof}
Consider the alternative report vector $b' = L\cdot b$, where $L$ is a very large number (sufficiently large such that $L\cdot b_i  + s_i \approx L\cdot b_i$ for all $i\range{1}{M}$). Note that $\feas(b',s)=\feas(b,s)$.
If the true values are $(b',s)$, then almost all gain comes from the buyer's values, so:
\begin{align*}
G_r(b',s) &\approx \sum_{i=1}^M
r_i(b',s)\cdot L\cdot b_i 
\\
OPT(b',s) &\approx \max_{i\in\feas(b,s)} L\cdot b_i
\end{align*}
Since the competitive ratio is $\alpha$, we must have $G_r(b',s)\geq \alpha OPT(b',s)$. Dividing both sides by $L$ yields:
\begin{align}
\tag{*}
\sum_{i=1}^M r_i(b',s)\cdot b_i ~~ \geq ~~ \alpha\cdot \max_{i\in\feas(b,s)} b_i
\end{align}
If the buyer's true valuations are $b$ but he reports $b'$, his expected gain is exactly the left-hand side of (*).
Therefore, a DSIC mechanism must guarantee that the buyer gets the same utility by reporting truthfully $b$. Hence:
\begin{align*}
\sum_{i=1}^M r_i(b,s)\cdot b_i ~~ \geq ~~ \alpha\cdot \max_{i\in\feas(b,s)} b_i
\end{align*}
as claimed.
The proof for the seller is analogous.
\end{proof}

We now apply Lemma \ref{lem:general} to utility vectors from a special family.
For every $d\range{1}{M}$, let $B^d$ be the family of vectors $b$ with:
\begin{align*}
b_i = 1 && \text{ when $i\range{1}{d}$ }
\\
b_i \leq 0 && \text{ when $i\range{d+1}{M}$ }
\end{align*}

For every constant $\eps\ll 1$, let $s^\eps$ be a vector with:
\begin{align*}
s^\eps_i = \eps^{M-i}
\end{align*}

\begin{lemma}
\label{lem:d=M-1}
Let $r$ be a DSIC mechanism with competitive ratio $\alpha$. Then, for every $d$, every $b\in B^d$ and every $\eps\in(0,1)$:
\begin{align*}
\sum_{i=1}^d
r_i(b, s^\eps)
\geq
d\cdot (\alpha-2 \eps)
\end{align*}
\end{lemma}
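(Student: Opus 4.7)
The plan is to induct on $d$. For the base case $d=1$, any $b \in B^1$ has $b_1 = 1$ and $b_i \leq 0$ for $i \geq 2$. Applying Lemma \ref{lem:general} on the buyer side to $(b, s^\eps)$ gives $\sum_i r_i(b, s^\eps) b_i \geq \alpha$; since the terms for $i \geq 2$ are nonpositive, this forces $r_1(b, s^\eps) \geq \alpha \geq \alpha - 2\eps$, which is the claim at $d=1$.

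For the inductive step, suppose the claim holds for $d-1$ and fix $b \in B^d$. The key trick is to introduce an auxiliary buyer report $b'' \in B^{d-1}$ defined by $b''_i = 1$ for $i \leq d-1$, $b''_d = 0$, and $b''_i = -L$ for $i > d$, where $L$ is a very large number. The buyer side of Lemma \ref{lem:general} applied to $(b'', s^\eps)$ yields $\sum_{i \leq d-1} r_i(b'', s^\eps) - L \sum_{i > d} r_i(b'', s^\eps) \geq \alpha$, which forces $\sum_{i > d} r_i(b'', s^\eps) = O(1/L)$ and so in the limit $L \to \infty$ concentrates the mechanism's mass on $\{0, 1, \dots, d\}$. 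The seller side of Lemma \ref{lem:general} applied to $(b'', s^\eps)$ (feasible set $\{0, 1, \dots, d\}$ with maximum $s^\eps_d = \eps^{M-d}$) then gives $\sum r_i(b'', s^\eps) \eps^{M-i} \geq \alpha \eps^{M-d}$; after dividing by $\eps^{M-d}$ and discarding the $O(1/L)$ contribution from $i > d$, the remaining $i \leq d-1$ terms carry coefficients $\eps^{d-i} \leq \eps$ and contribute at most $\eps$ in total, leaving $r_d(b'', s^\eps) \geq \alpha - \eps - O(1/L)$.

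Combining this with the inductive hypothesis applied to $b'' \in B^{d-1}$ yields $\sum_{i \leq d} r_i(b'', s^\eps) \geq (d-1)(\alpha - 2\eps) + (\alpha - \eps) - O(1/L) = d\alpha - (2d-1)\eps - O(1/L)$. The final step is to transfer this bound to the true $b$ via buyer DSIC: the inequality $\sum r_i(b, s^\eps) b_i \geq \sum r_i(b'', s^\eps) b_i$, combined with $b_i = 1$ for $i \leq d$ and $b_i \leq 0$ for $i > d$ (so the left side is at most $\sum_{i \leq d} r_i(b, s^\eps)$) and with $r_i(b'', s^\eps) \to 0$ for $i > d$ (so the right side tends to $\sum_{i \leq d} r_i(b'', s^\eps)$), gives $\sum_{i \leq d} r_i(b, s^\eps) \geq d(\alpha - 2\eps)$ after letting $L \to \infty$. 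The main obstacle is engineering the auxiliary report $b''$ so that the $-L$ penalty simultaneously concentrates the mechanism's mass on $\{1, \dots, d\}$ (needed for the clean seller-side lower bound on $r_d$) and keeps the DSIC transfer from $b''$ back to $b$ lossless in the limit.
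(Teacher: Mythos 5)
Your proof is correct and follows essentially the same route as the paper's: the same auxiliary report (the paper uses the penalty $-1/\eps^{M}$ where you use a decoupled $-L$ with $L\to\infty$), the same seller-side application of Lemma \ref{lem:general} to extract $r_d(b'',s^\eps)\geq\alpha-\eps$ up to negligible error, the same induction, and the same DSIC transfer back to $b$. Your decoupled limit $L\to\infty$ is in fact slightly more careful at the final transfer step, where the deviation utility includes the terms $\sum_{i>d} r_i(b'',s^\eps)\,b_i$ with the true $b_i$ possibly very negative; these vanish as $L\to\infty$, a point the paper's own write-up passes over.
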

\begin{proof}
The proof is by induction on $d$.

For $d=1$, we have $b_1=1$ and all other $b_i$'s are zero or negative.  Apply Lemma \ref{lem:general} to the buyer:
\begin{align*}
&&
r_1(b,s^\eps)\cdot b_1 + \sum_{i=2}^M r_i(b,s^\eps)\cdot b_i
\geq &
\alpha \cdot b_1
\\
\implies &&
r_1(b,s^\eps) \geq & \alpha \geq 1\cdot (\alpha-2\eps)
\end{align*}

We now assume the claim is true for $d-1$ and prove it is true for $d$. Let $b$ be any vector in $B^d$.
Consider the alternative buyer's valuation $b'$, with:
\begin{align*}
b'_i = 
\begin{cases}
1 & i\range{1}{d-1}
\\
0 & i = d
\\
-1/\eps^{M} & i\range{d+1}{M}
\end{cases}
\end{align*}
So the gain in options $d+1,\ldots,M$ is dominated by $-1/\eps^{M}$. We claim that the probability with which the mechanism selects one of these options is negligible. Let $r_{D} := \sum_{i=d+1}^M r_i(b',s^\eps)$. Then, when the true valuations are $b',s^\eps$:
\begin{align*}
G_r \leq (1-r_{D})\cdot 1 - r_{D}/\eps^{M}
\end{align*}
If the competitive ratio of $r$ is positive, then $G_r$ must be positive, which implies $r_{D} < \eps^{M}$.

Now, apply Lemma \ref{lem:general} to the \emph{seller}:
\begin{align*}
&&
\sum_{i=1}^d r_i(b',s^\eps)\cdot \eps^{M-i}
+
\sum_{i=d+1}^M r_i(b',s^\eps)\cdot \eps^{M-i} 
~~ &\geq ~~ \alpha\cdot \eps^{M-d}
\\
\implies &&
\sum_{i=1}^d r_i(b',s^\eps)\cdot \eps^{M-i}
~~ &\geq ~~ \alpha\cdot \eps^{M-d} - r_D
\\
\implies &&
\sum_{i=1}^d r_i(b',s^\eps)\cdot \eps^{d-i} 
~~ &\geq ~~ 
\alpha - \eps^{d}
\\
\implies &&
r_d(b',s^\eps) ~~ &\geq ~~ 
\alpha
- \sum_{i=1}^{d-1} r_i(b',s^\eps)\cdot \eps^{d-i}
 - \eps^{d}
\\
\implies &&
r_d(b',s^\eps) ~~ &\geq ~~ 
\alpha-2 \eps
\end{align*}

Note that $b'\in B^{d-1}$. Hence, by the induction assumption:
\begin{align*}
\sum_{i=1}^{d-1} r_i(b',s^\eps)\geq (d-1)\cdot(\alpha-2\eps)
\end{align*}
Summing the above two inequalities gives:
\begin{align*}
\sum_{i=1}^{d} r_i(b',s^\eps)\geq d\cdot(\alpha-2\eps)
\end{align*}
If the buyer's true valuation is $b$ but he deviates and reports $b'$, he is guaranteed a utility of at least $d\cdot (\alpha-2\eps)$. Therefore, a DSIC mechanism must guarantee the buyer at least the same utility from reporting truthfully $b$. Hence:
\begin{align*}
\sum_{i=1}^{d} r_i(b,s^\eps)\geq d\cdot(\alpha-2\eps)
\end{align*}
verifying the induction claim for $d$.
\end{proof}

\begin{proof}[Proof of Theorem \ref{thm:neg}]
Apply Lemma \ref{lem:d=M-1} with $d=M$. 
It implies that every DSIC mechanism with approximation factor at least $\alpha$ must have $\sum_{i=1}^M r_i(b,s^\eps)\geq M\cdot (\alpha-2\eps)$ for every $b\in B^M$. But the sum of all $M$ probabilities must be at most $1$. Therefore, $\alpha \leq 1/M+2\eps$. This is true for every $\eps\in(0,1)$, so $\alpha\leq 1/M$.
\end{proof}

\section{Submodular utilities}
\label{sec:submodular}
In this section, we assume that the utility vectors of both agents satisfy the condition of \emph{decreasing marginal returns}, which is equivalent to submodularity:
\begin{definition}
A utility vector $v_1,\ldots,v_M$ is \emph{submodular} if, for every $i\range{1}{M-1}$:
\begin{align*}
v_{i+1} - v_i ~~ \leq ~~ v_i - v_{i-1}
\end{align*}
\end{definition}
In the case of multi-unit bilateral trade (Example \ref{exm:multiunit}), it is clear that if the buyer's utility function $\beta_i$ is submodular, then his utility vector $b_i$ is submodular too. It can be checked that the same is true for the seller: if $\sigma_i$ is submodular then $s_i$ is submodular too. We now present a randomized mechanism for submodular agents.

\begin{example}
The \emph{Decreasing Random (DR) mechanism} chooses each option $i\in \{1,\ldots,M\}$ with probability:
\begin{align*}
r_i := {1\over i\cdot H_M}
\end{align*}
where $H_M$ is the $M$-th harmonic number:
\begin{align*}
H_M := \sum_{j=1}^M {1\over j} 
~~\in~~ 
[\ln{M}+{1\over M},~ \ln{M}+1]
\end{align*}
It then lets the agents decide whether they want option $i$ or nothing. The agents will cooperate iff $b_i\geq 0,s_i\geq 0$. Therefore, the expected gain is:
\begin{align*}
G_{DR}(b,s)
~~
=
~~
\sum_{i\in\feas(b,s)} {b_i+s_i \over i\cdot H_M}
\end{align*}
Suppose both the buyer and the seller have submodular utilities. 
Let $v_i := b_i + s_i$. Then the vector $v_i$ is submodular too, and this implies that for every $i\range{1}{M}$:
\begin{align*}
{v_{i+1}\over {i+1}} ~~ \leq ~~ {v_{i}\over {i}}
\end{align*}
Therefore, for every $k\range{1}{M}$:
\begin{align*}
G_{DR}(b,s)
~~
&&\geq
~~
\sum_{i\range{1}{k}} {v_i \over i\cdot H_M}
\\
&&\geq
~~
\sum_{i\range{1}{k}} {v_k \over k\cdot H_M}
\\
&&=
~~
k {v_k \over k\cdot H_M}
=
~~
{v_k \over H_M}
\end{align*}
This is true in particular for the option $k$ in which OPT is attained. Therefore, $G_{DR}(b,s) \geq OPT / H_M$. 

This inequality is tight. For example, we might have, for all $i$, $b_i=i$ and $s_i=0$ (both are submodular). Then, the optimal gain is $M$ while $G_{DR}(b,s) = {M\over H_M}$. Therefore, $C_{DR} = {1 / H_M}$.
\end{example}

Our next theorem shows that, even when the agents' utilities are known to be submodular, no DSIC mechanism can be better than the decreasing-random mechanism.
\begin{theorem}
\label{thm:submodular}
When there are $M$ options, every DSIC mechanism that works for arbitrary submodular valuations 
has an expected competitive ratio of at most $1/H_M$.
\end{theorem}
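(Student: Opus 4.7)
The plan is to mimic the inductive argument in the proof of Theorem~\ref{thm:neg}, with submodular replacements for the auxiliary valuations. I would replace the super-modular seller vector $s^\eps_i = \eps^{M-i}$ by the linear (hence submodular) vector $s^*_i := \delta\cdot i$, and replace the family $B^d$ by its submodular analog $B^{(d)}$: submodular buyer valuations $b$ with $b_i = i$ for $i\le d$ and $b_i \le 0$ for $i > d$. A canonical element is $b^{(d)}_i = i$ for $i\le d$ and $b^{(d)}_i = d - K(i-d)$ for $i>d$; its increments $1,\ldots,1,-K,\ldots,-K$ are non-increasing, so it is submodular for every $K\ge 0$, and for $K\ge d+1$ we have $b^{(d)}_i<0$ for all $i>d$.

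The inductive claim I would try to prove is the submodular analog of Lemma~\ref{lem:d=M-1}: for every $b\in B^{(d)}$,
\begin{align*}
\sum_{i=1}^d r_i(b, s^*) \;\ge\; H_d\cdot(\alpha - o(1)),
\end{align*}
with the $o(1)$ term vanishing as $K\to\infty$ and $\delta\to 0$. Specialising to $d=M$ and using $\sum_i r_i\le 1$ would then give $\alpha\le 1/H_M$. The base case $d=1$ is the buyer side of Lemma~\ref{lem:general} applied at $(b^{(1)}, s^*)$: the feasible maximum of $b^{(1)}$ equals~$1$, and the same positive-gain argument used inside the proof of Lemma~\ref{lem:d=M-1} forces $\sum_{i>1} r_i(b^{(1)}, s^*) \to 0$ as $K\to\infty$, yielding $r_1(b^{(1)}, s^*)\ge \alpha - o(1) = H_1(\alpha - o(1))$.

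For the inductive step, the seller side of Lemma~\ref{lem:general} applied at $(b^{(d)}, s^*)$ gives a weighted bound: because the feasible set is $\{1,\ldots,d\}$ and $s^*$ attains its feasible maximum $\delta d$ at $i=d$, after killing the $i>d$ contribution by the positive-gain argument one obtains $\sum_{i=1}^d i\cdot r_i(b^{(d)}, s^*) \;\ge\; \alpha d - o(1)$. Rearranging isolates $r_d(b^{(d)}, s^*)\ge \alpha - \tfrac{1}{d}\sum_{i=1}^{d-1} i\cdot r_i(b^{(d)}, s^*)$, which together with a matching upper bound $\sum_{i=1}^{d-1} i\cdot r_i(b^{(d)}, s^*) \le (d-1)\alpha + o(1)$ would produce the per-step increment $r_d(b^{(d)}, s^*) \ge \alpha/d - o(1)$. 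Transferring the induction hypothesis from $(b^{(d-1)}, s^*)$ to $(b^{(d)}, s^*)$ via the buyer-side DSIC constraint (exploiting that $b^{(d)}$ and $b^{(d-1)}$ agree on $i\le d-1$) then gives $\sum_{i=1}^{d-1} r_i(b^{(d)}, s^*) \ge H_{d-1}(\alpha - o(1))$, and adding the per-step $\alpha/d$ yields the claimed $H_d(\alpha - o(1))$.

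The main obstacle is precisely the companion upper bound $\sum_{i<d} i\cdot r_i(b^{(d)}, s^*)\le (d-1)\alpha + o(1)$. In the proof of Lemma~\ref{lem:d=M-1} this difficulty never arises: the super-modular $s^\eps$ has $s^\eps_d/s^\eps_{d-1} = 1/\eps$ arbitrarily large, so the seller side of Lemma~\ref{lem:general} isolates $r_d\ge \alpha-2\eps$ pointwise and each step contributes a full~$\alpha$ no matter what the lower-indexed probabilities do. Every submodular seller, by contrast, has $s_d/s_{d-1} = O(1)$ (at best $d/(d-1)$ for the linear $s^*$), so the lemma only yields the weighted constraint above and the per-step gain is intrinsically at most $\alpha/d$; recovering it, rather than the trivial $\alpha - 1 + 1/d$ coming from $\sum_{i<d} i\cdot r_i \le d-1$, requires carefully tying the mechanism's outputs at $(b^{(d)}, s^*)$ and $(b^{(d-1)}, s^*)$ through the buyer-side DSIC constraint. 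This is exactly why the bound degrades from $1/M$ to the harmonic $1/H_M$ under submodularity, matching the decreasing-random mechanism whose defining property $i\cdot r_i = 1/H_M$ is constant in $i$.
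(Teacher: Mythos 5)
Your setup is sound and close in spirit to the paper's (linear seller vector, submodular buyer families indexed by the number of positive coordinates, induction on $d$, Lemma~\ref{lem:general} applied to both sides), but the proposal has a genuine gap at exactly the step you flag as ``the main obstacle,'' and that step is the crux of the whole argument. The companion upper bound $\sum_{i<d} i\cdot r_i(b^{(d)},s^*)\le (d-1)\alpha+o(1)$ does not follow from DSIC or from the competitive-ratio hypothesis: nothing prevents a mechanism from placing large probability on low-indexed feasible options at this particular input, so you cannot isolate a pointwise increment $r_d\ge \alpha/d$. A second, related problem is the transfer step: your induction invariant is the \emph{unweighted} sum $\sum_{i\le d-1} r_i$, but the buyer-side DSIC constraint only compares \emph{utilities}, i.e.\ sums weighted by $b^{(d)}_i=i$; a lower bound on $\sum_{i\le d-1} r_i(b^{(d-1)},s^*)$ does not translate into a lower bound on $\sum_{i\le d-1} i\cdot r_i(b^{(d-1)},s^*)$, so the hypothesis cannot be carried from $(b^{(d-1)},s^*)$ to $(b^{(d)},s^*)$ as described.

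The paper resolves both issues by changing the induction invariant to a \emph{weighted} inequality over the entire family $B_*^d$ (all submodular $b$ with $b_1\ge\cdots\ge b_d>0\ge b_{d+1}\ge\cdots\ge b_M$), namely $\sum_{i=1}^d r_i(b,s^*)\, b_i \ge \alpha\sum_{i=1}^d b_i/i$, and by using the deviation $b'_i=b_i-\tfrac{i}{d}b_d$ for $i\le d$. This $b'$ is still submodular, lies in $B_*^{d-1}$, and is chosen so that the seller-side bound $\sum_i i\cdot r_i(b',s^*)\ge \alpha d$ (rescaled by $b_d/d$) and the induction hypothesis at $b'$ \emph{add up exactly} to the desired weighted inequality for $b$; the buyer-side DSIC constraint then moves it from $r(\cdot\,|\,b')$ to $r(\cdot\,|\,b)$ because both sides of the comparison are genuine buyer utilities. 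Specializing to $b\equiv 1$ at $d=M$ gives $\sum_i r_i\ge \alpha H_M$ and hence $\alpha\le 1/H_M$. In short: the harmonic loss does not come from a per-step gain of $\alpha/d$ in probability mass, but from the telescoping of the valuations themselves; without the weighted invariant and the specific deviation $b'$, your induction does not close.
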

We will prove the theorem by applying Lemma \ref{lem:general} to utility vectors from a special family.
For every $d\range{1}{M}$, let $B_*^{d}$ be the family of submodular vectors $b$ with:
\begin{align*}
b_1 \geq \cdots \geq b_d > 0 \geq b_{d+1} \geq \cdots \geq b_M
\end{align*}

Let $s^*$ be a vector with: $s^*_i = i$ for all $i$ (note that $s^*$ is submodular too).

\begin{lemma}
\label{lem:submodular}
Let $r$ be a DSIC mechanism with competitive ratio $\alpha$. Then, for every $d$ and for every $b\in B_*^d$:
\begin{align*}
\sum_{i=1}^d
r_i(b, s^*)\cdot b_i
\geq
\alpha\cdot \sum_{i=1}^d {b_i\over i}
\end{align*}
\end{lemma}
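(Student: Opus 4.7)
The plan is to proceed by induction on $d$, mirroring the structure of the proof of Lemma~\ref{lem:d=M-1}. For the base case $d=1$, applying Lemma~\ref{lem:general} (buyer version) at $(b,s^*)$ suffices: since $b\in B_*^1$ has $b_i\leq 0$ for all $i>1$, the terms $r_i(b,s^*)\,b_i$ with $i>1$ contribute non-positively, so $r_1(b,s^*)\,b_1 \geq \sum_i r_i(b,s^*)\,b_i \geq \alpha b_1 = \alpha\,b_1/1$, which is the claim.

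For the inductive step, given $b\in B_*^d$, I would construct an auxiliary submodular vector $b'\in B_*^{d-1}$ by setting $b'_i=b_i$ for $i<d$, $b'_d=0$, and $b'_i=-K(i-d)$ for $i>d$ with a large parameter $K$. Submodularity of $b'$ follows from the submodularity of $b$ together with $b_d>0$ (which implies $b_{d-2}\leq 2b_{d-1}$), provided $K\geq b_{d-1}$, so $b'\in B_*^{d-1}$. The inductive hypothesis applied to $b'$ (which agrees with $b$ on positions $i<d$) then yields
\[
\sum_{i=1}^{d-1} r_i(b',s^*)\,b_i ~~\geq~~ \alpha\sum_{i=1}^{d-1}\frac{b_i}{i}.
\]

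The key technical step, and where I expect the main difficulty, is to obtain the submodular analogue of the step ``$r_d(b',s^\eps)\geq \alpha-2\eps$'' from Lemma~\ref{lem:d=M-1}: I need $r_d(b',s^*)\geq \alpha/d$ up to vanishing error as $K\to\infty$. In Lemma~\ref{lem:d=M-1} this follows cleanly from Lemma~\ref{lem:general} (seller) because the exponentially decaying weights $s^\eps_i=\eps^{M-i}$ suppress the lower-$i$ contributions; here the linear weights $s^*_i=i$ offer no such decay. Lemma~\ref{lem:general} (seller) at $(b',s^*)$, whose feasible set is $\{0,\dots,d\}$ with $\max s^*_i=d$, only gives $\sum_i r_i(b',s^*)\cdot i \geq \alpha d$, so the plan is to combine this with (i) the vanishing $\sum_{i>d} r_i(b',s^*)\to 0$ as $K\to\infty$, which is forced by positivity of the competitive ratio at $(b',s^*)$ because options $i>d$ have $b'_i+s^*_i\to-\infty$, and (ii) the inductive hypothesis, which controls the probability mass on positions $i<d$.

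Once $r_d(b',s^*)\geq \alpha/d$ is established, adding $r_d(b',s^*)\,b_d \geq \alpha b_d/d$ to the inductive hypothesis gives $\sum_{i=1}^d r_i(b',s^*)\,b_i \geq \alpha\sum_{i=1}^d b_i/i - o_K(1)$. Finally, the buyer's DSIC condition comparing the truthful report $b$ to the alternative report $b'$ gives $\sum_i r_i(b,s^*)\,b_i \geq \sum_i r_i(b',s^*)\,b_i$, and since $b_i\leq 0$ for $i>d$ we also have $\sum_{i=1}^d r_i(b,s^*)\,b_i \geq \sum_i r_i(b,s^*)\,b_i$. Letting $K\to\infty$ eliminates the error term and completes the induction.
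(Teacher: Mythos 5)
Your base case, your choice of the seller vector $s^*$, the vanishing-tail argument for options $i>d$, and the final DSIC deviation step all match the paper. But the inductive step has a genuine gap, concentrated exactly where you flagged the difficulty: the claim $r_d(b',s^*)\geq \alpha/d$. The tools you propose cannot deliver it. Lemma \ref{lem:general} applied to the seller at $(b',s^*)$ gives (up to the vanishing tail) $\sum_{i=1}^d r_i(b',s^*)\cdot i \geq \alpha d$, and this inequality is perfectly consistent with $r_d(b',s^*)=0$: putting all mass on options $i\leq d-1$ yields a left-hand side as large as $d-1$, which exceeds $\alpha d$ whenever $\alpha\leq 1-1/d$ --- true throughout the relevant range $\alpha\leq 1/H_M$. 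Your item (ii) does not rescue this: the inductive hypothesis is a \emph{lower} bound on the weighted probability mass at positions $i<d$, so it pushes in the wrong direction; to force mass onto option $d$ you would need an \emph{upper} bound on $\sum_{i<d}r_i(b',s^*)$, which nothing in your argument provides. Indeed it is doubtful that $r_d(b',s^*)\geq\alpha/d$ holds pointwise for every DSIC $\alpha$-competitive mechanism at your $b'$; the lemma as stated only constrains a weighted sum, and one can shift probability from option $d$ to option $1$ without violating any of the constraints you invoke.

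The paper avoids isolating $r_d$ altogether by choosing a different deviation: $b'_i = b_i - \tfrac{i}{d}b_d$ for $i\leq d$ (and very negative for $i>d$). Subtracting a linear function preserves submodularity and zeroes out coordinate $d$, so $b'\in B_*^{d-1}$ and the inductive hypothesis applies with $b'_i/i = b_i/i - b_d/d$. The point of the linear shift is that the seller's inequality $\sum_{i=1}^d r_i(b',s^*)\cdot i\geq \alpha d$, multiplied by $b_d/d$, becomes exactly $\sum_{i=1}^d r_i(b',s^*)\cdot\tfrac{i}{d}b_d\geq \alpha b_d$, i.e., it bounds precisely the correction term that was subtracted from the buyer's values. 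Adding this to the inductive hypothesis telescopes: the left-hand sides recombine into $\sum_{i=1}^d r_i(b',s^*)\,b_i$ and the right-hand sides into $\alpha\sum_{i=1}^d b_i/i$, with no need for any per-coordinate bound on $r_d$. Your construction, by keeping $b'_i=b_i$ for $i<d$ and only zeroing $b_d$, discards this cancellation and leaves you needing the unprovable pointwise bound. The fix is to replace your $b'$ with the linearly shifted one.
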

\begin{proof}
The proof is by induction on $d$.

For $d=1$, we have $b_1=1$ and $b_{i\geq 2}\leq 0$. Applying Lemma \ref{lem:general} to the buyer gives $r_1(b,s^*)\cdot b_1 \geq \alpha\cdot b_1$.

We now assume the claim is true for $d-1$ and prove it is true for $d$. Let $b$ be any vector in $B_*^d$.
Consider the alternative vector $b'$, with:
\begin{align*}
b_i' = 
\begin{cases}
b_i - {i \over d}\cdot b_d & i\range{1}{d}
\\
-1/\eps^M  & i\range{d+1}{M}
\end{cases}
\end{align*}
where $\eps$ is a very small number. By arguments similar to Lemma \ref{lem:d=M-1}, when $\eps$ is sufficiently small, any mechanism with positive competitive ratio must select any option $i\range{d+1}{M}$ with negligible probability. 
So, applying Lemma \ref{lem:general} to the seller gives:
\begin{align}
&&
\sum_{i=1}^d r_i(b',s^*)\cdot i ~~ \geq ~~ s^*_d\cdot \alpha ~~&=~~ \alpha\cdot d
\notag
\\
\label{eq:seller}
\implies &&
\sum_{i=1}^d r_i(b',s^*)\cdot {i\over d}\cdot b_d ~~ &\geq \alpha\cdot b_d
\end{align}

Since $b$ is submodular,
and the number we subtract from every $b_i$ grows with $i$, 
$b'$ is submodular too.
Moreover,
\begin{align*}
b'_1 > b'_2 > \cdots b'_d = 0 > b'_{d+1} > \cdots > b'_M
\end{align*}
Therefore, $b'\in B^{d-1}$ and we can apply the induction assumption:
\begin{align}
\label{eq:buyer}
\sum_{i=1}^{d-1} r_i(b',s^*)\cdot b'_i \geq \alpha\cdot \sum_{i=1}^{d-1} {b'_i\over i}
\end{align}
Note that $b'_d=0$, so we can start the summation in the left-hand side from $d$ and it will not change. Summing up inequalities (\ref{eq:seller}) and (\ref{eq:buyer}) yields:
\begin{align}
\notag
&& 
\sum_{i=1}^d r_i(b',s^*)\cdot (b'_i + {i\over d}\cdot b_d)
~~ &\geq 
\alpha\cdot \left(b_d + \sum_{i=1}^{d-1} {b'_i\over i}\right)
\\
\notag
\implies
&& 
\sum_{i=1}^d r_i(b',s^*)\cdot b_i
~~ &\geq 
\alpha\cdot \left(b_d + \sum_{i=1}^{d-1} ({b_i\over i} - {b_d\over d})\right)
\\
\label{eq:sum}
\implies
&& 
\sum_{i=1}^d r_i(b',s^*)\cdot b_i
~~ &\geq 
\alpha\cdot \sum_{i=1}^{d} {b_i\over i}
\end{align}
If the buyer's true valuation is $b$ but he deviates and reports $b'$, his utility is the left-hand side of (\ref{eq:sum}). A DSIC mechanism must guarantee the buyer at least the same utility from reporting truthfully $b$. Hence:
\begin{align*}
\sum_{i=1}^d r_i(b,s^*)\cdot b_i
~~ &\geq 
\alpha\cdot \sum_{i=1}^{d} {b_i\over i}
\end{align*}
verifying the induction claim for $d$.
\end{proof}

\begin{proof}[Proof of Theorem \ref{thm:submodular}]
Consider the utility vector $b$ in which $b_i = 1$ for all $i$. Note that $b\in B_*^M$ so it is eligible to Lemma \ref{lem:submodular}. Hence:
\begin{align*}
\sum_{i=1}^M r_i(b,s^*)
~~ &\geq 
\alpha\cdot \sum_{i=1}^{M} {1\over i} = \alpha\cdot H_M
\end{align*}
But the sum of probabilities is at most 1, so $\alpha\leq 1/H_M$ as claimed.
\end{proof}

\section{Bilateral Trade}
\label{sec:bilateral}
In this section we prove two corollaries of Theorem \ref{thm:neg} for two settings of bilateral trading.

\begin{corollary}
\label{cor:multi-unit}
A seller holds $M$ units of a single good. The seller values each bundle of $i$ units as $\sigma_i$ and a potential buyer values each bundle of $i$ units as $\beta_i$. 
The price-per-unit $p$ is fixed exogenously. Both agents have utility functions quasi-linear in money. 
Then, the competitive ratio of any DSIC mechanism is:
\begin{itemize}
\item  at most $1/M$ with general valuations, and:
\item  at most $1/H_M$ with submodular valuations.
\end{itemize}
\end{corollary}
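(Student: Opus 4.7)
The plan is to derive the corollary from Theorems \ref{thm:neg} and \ref{thm:submodular} via the reduction already described in Example \ref{exm:multiunit}. Setting $b_i := \beta_i - p\cdot i$ and $s_i := p\cdot i + \sigma_{M-i} - \sigma_M$ turns the multi-unit bilateral trade market with fixed price $p$ into an instance of the cooperation model with $M$ options, where option $i$ represents trading $i$ units at the given price. Under this identification, the feasibility set $\feas(b,s)$ coincides with the set of trades that are individually-rational for both agents at price $p$, and $OPT(b,s)$ equals the optimal gain-from-trade. Because the agents' true utilities are preserved by this affine reparameterization, a DSIC mechanism for bilateral trade is essentially the same object as a DSIC mechanism for the induced cooperation instance, and the two have the same competitive ratio.

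The substantive step is to verify that the worst-case utility vectors constructed in the proofs of Theorems \ref{thm:neg} and \ref{thm:submodular} actually arise from legitimate bilateral trade markets. For any fixed price $p$ and any cooperation utility vectors $(b,s)$, I would invert the map by taking $\beta_i := b_i + p\cdot i$ with $\beta_0 = 0$, and $\sigma_j := s_{M-j} + \sigma_M - p(M-j)$ with $\sigma_M$ chosen as an arbitrary normalization constant. These define valid quasi-linear buyer and seller valuations, so every cooperation utility pair $(b,s)$ arises from some bilateral trade instance. Feeding the family $B^d$ together with $s^\eps$ from Lemma \ref{lem:d=M-1} through this inverse map realises the hard instances as bilateral trade markets, and Theorem \ref{thm:neg} immediately yields the $1/M$ upper bound for the general-valuations case.

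For the submodular bound, I need to confirm that submodularity is transported by the reduction in both directions. The buyer side is immediate: $\beta_i$ and $b_i$ differ by the linear term $p\cdot i$ whose second differences vanish, so $\beta$ is submodular iff $b$ is. The seller side requires a short calculation: with $\sigma_j = s_{M-j} + \sigma_M - p(M-j)$, one gets $\sigma_{j+1}-\sigma_j = s_{M-j-1} - s_{M-j} + p$, and requiring these differences to be non-increasing in $j$ reduces, after the index reversal $i = M-j$, exactly to the condition that $s_i - s_{i-1}$ is non-increasing in $i$, i.e.\ submodularity of $s$. In particular the instances of Lemma \ref{lem:submodular} with $s^*_i = i$ and $b \in B_*^d$ correspond to submodular bilateral trade markets, so Theorem \ref{thm:submodular} delivers the $1/H_M$ bound. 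I expect the main obstacle to be precisely this bookkeeping around the index reversal on the seller side; once it is settled, both bounds follow by quoting the two impossibility theorems.
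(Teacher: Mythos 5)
Your proposal is correct and takes essentially the same route as the paper, whose proof is the one-line instantiation $b_i = \beta_i - p\cdot i$, $s_i = p\cdot i + \sigma_{M-i} - \sigma_M$ followed by an appeal to Theorems \ref{thm:neg} and \ref{thm:submodular}. Your additional verification that the hard instances are realizable as trade markets and that submodularity transfers through the index reversal on the seller side is exactly the bookkeeping the paper leaves implicit.
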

\begin{proof}
Apply Theorems \ref{thm:neg} and \ref{thm:submodular} with $b_i = \beta_i - p\cdot i$ and $s_i = p\cdot i + \sigma_{M-i} - \sigma_{M}$.
\end{proof}

\begin{remark}
Obviously there are special cases in which Corollary \ref{cor:multi-unit} does not hold. 
For example, if it is public knowledge that the utilities of both the buyer and the seller are additive (i.e, $\beta_i = i\cdot \beta_1$ and $\sigma_i = i\cdot \sigma_1$ for all $i$), then the following mechanism is DSIC: ask the agents to report $\beta_1$ and $\sigma_1$; trade $M$ units iff $\beta_1 \geq p \geq \sigma_1$.
\end{remark}

\begin{corollary}
\label{cor:multi-type}
A seller holds a set $\mathbb{M}$ containing
$M$ different goods, a single unit of each good.
The seller has an arbitrary valuation function over bundles of goods: $\sigma_i(X)$ is the seller's value for bundle $X$.
The buyer has a unit-demand valuation function: he values each item $i$ as $\beta_i$ and is interested in buying at most one item.
Then, the competitive ratio of any DSIC mechanism is at most $1/M$.
\end{corollary}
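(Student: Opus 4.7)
The plan is to reduce this bilateral-trade setting to the abstract cooperation model of Section~\ref{sec:model} and then invoke Theorem~\ref{thm:neg}. First I would identify the $M$ non-trivial cooperation options with the $M$ possible single-item trades: option $i\range{1}{M}$ means ``the seller hands item $i$ to the buyer, and the buyer pays $p_i$'', while option $0$ is no trade. Because the buyer is unit-demand (interested in at most one item), no IR-respecting mechanism ever needs to consider bundle trades of size $\geq 2$, so these $M+1$ outcomes genuinely exhaust the relevant mechanism actions.

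Next I would translate the quasi-linear utilities following Example~\ref{exm:unitdemand}: set $b_i = \beta_i - p_i$ and $s_i = p_i + \sigma(\mathbb{M}\setminus\{i\}) - \sigma(\mathbb{M})$. Under this substitution the cooperation-model feasibility condition $b_i\geq 0$ and $s_i\geq 0$ coincides exactly with individual rationality for both agents, and $b_i+s_i$ equals the gain-from-trade of trade $i$. With the prices $p_i$ fixed, the maps $\beta\mapsto b$ and $\sigma\mapsto s$ are affine bijections, so truthful reporting of $(\beta,\sigma)$ corresponds to truthful reporting of $(b,s)$, and DSIC for the trading mechanism is equivalent to DSIC for the induced cooperation mechanism. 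Any DSIC trading mechanism with competitive ratio $\alpha$ therefore yields a DSIC cooperation mechanism with the same ratio, and Theorem~\ref{thm:neg} forces $\alpha \leq 1/M$.

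The one point that needs (modest) care is realizability of the lower-bound instances in this framework: I would verify that for every $(b,s)$ appearing in the proof of Theorem~\ref{thm:neg} there is a legitimate input $(\beta,\sigma,p)$ producing it. The prices $p_i$ are exogenous parameters we may choose freely, the $\beta_i$'s can be arbitrary reals via $\beta_i := b_i + p_i$, and $\sigma$ is an arbitrary valuation over all bundles: we pick $\sigma(\mathbb{M})$ freely, set $\sigma(\mathbb{M}\setminus\{i\}) := \sigma(\mathbb{M}) + s_i - p_i$, and extend $\sigma$ arbitrarily on all other bundles (no trade ever uses those values). This realizability check is the main obstacle, and it is essentially automatic, which is why the corollary is a one-line deduction from Theorem~\ref{thm:neg}.
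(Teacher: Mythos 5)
Your proposal is correct and follows exactly the paper's own (one-line) proof: the same reduction to the abstract cooperation model via $b_i = \beta_i - p_i$ and $s_i = p_i + \sigma(\mathbb{M}\setminus\{i\}) - \sigma(\mathbb{M})$, followed by an application of Theorem~\ref{thm:neg}. The realizability check you add at the end is a sensible detail the paper leaves implicit, but it does not change the argument.
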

\begin{proof}
Apply Theorem \ref{thm:neg} with $b_i = \beta_i - p_i$ and $s_i = p_i + \sigma(\mathbb{M}\{i\}) - \sigma({\mathbb{M}})$.
\end{proof}
\begin{remark}
Corollary \ref{cor:multi-type} holds even if it is public knowledge that the seller has an additive valuation function. However, it does not hold if both the seller and the buyer are additive. In this case, it is possible to trade in each item-type separately using a simple mechanism: trade iff $\beta_i\geq p_i\geq \sigma_i$.
\end{remark}

\section{Conclusion}
We presented impossibility theorems on impossibility of truthful cooperation in general. 
Applying these theorems to Examples \ref{exm:multiunit} and \ref{exm:unitdemand} yields impossibility results on bilateral trade with fixed prices.

Our general results about impossibility of cooperation can potentially be extended in two ways. 

1. We assumed that the mechanism is not allowed to charge money --- so it must be strongly-budget-balanced. What happens if we allow the mechanism to charge or pay money, but require that it remains weakly-budget-balanced (no deficit)?

2. We required the mechanism to be dominant-strategy IC. What happens if we allow it to be IC only in Bayesian-Nash equilibrium, like the original theorem of \citet{myerson1983efficient}?

The applications to bilateral trade can also be extended. Suppose there are $M$ different kinds of goods and the buyer is not unit-demand? If the valuations are arbitrary, then our general impossibility implies that the competitive ratio can be at most $1/2^M$. What is the upper bound when the agents' valuations are sub-additive? sub-modular? gross-substitute? 

\bibliographystyle{apalike}
\bibliography{erel}

\end{document}